\newcommand{\lra}{\longrightarrow}
\newcommand{\s}{\lesssim}
\newcommand{\true}{\mathsf{tt}}
\newcommand{\false}{\mathsf{ff}}
\newcommand{\LogCS}{\mathcal{L}_{CS}}
\newcommand{\Conformance}[2]{#1\lesssim_{CS}#2}
\newcommand{\CoLog}{\mathcal{S}_{CS}}
\newcommand{\OLog}{\preceq}
\newcommand{\LogS}{\mathcal{L}_{S}}
\newcommand{\LogHM}{\mathcal{L}_{HM}}
\newcommand{\LogCC}{\mathcal{L}_{CC}}
\newcommand{\CoCo}[2]{#1\lesssim_{CC}#2}
\newcommand{\CCLog}{\mathcal{S}_{CC}}
\begin{document}

\title{Logics for Contravariant Simulations \thanks{Research supported by the Spanish
projects DESAFIOS10 TIN2009-14599-C03-01, TESIS TIN2009-14321-C02-01 and PROMETIDOS
S2009/TIC-1465.}}

\author{Ignacio F\'abregas \and David de Frutos Escrig
\and Miguel Palomino}
\institute{Departamento de Sistemas Inform\'aticos y
Computaci\'on,  UCM\\
\email{fabregas@fdi.ucm.es \quad \{defrutos, miguelpt\}@sip.ucm.es}}

\maketitle

\begin{abstract}
Covariant-contravariant simulation and conformance simulation are two generalizations of the simple notion of simulation which aim at capturing the fact that it is not always the case that ``the larger the number of behaviors, the better''. Therefore, they can be considered to be more adequate to express the fact that a system is a correct implementation of some specification. We have previously shown that these two more elaborated notions fit well within the categorical framework developed to study the notion of simulation in a generic way. Now we show that their behaviors have also simple and natural logical characterizations, though more elaborated than those for the plain simulation semantics.
\end{abstract}

\section{Introduction and some related work}

Simulations are a very natural way to compare systems modeled by labeled transition systems or other related mechanisms based on describing the behavior of states by means of the actions they can execute \cite{Park81}. They aim at comparing processes based on the simple premise ``you are better if you can do as much as me, and perhaps some additional new things''. This assumes that all the executable actions are controlled by the user (hence, no difference between input and output actions) and does not take into account that  the system will choose in an unpredictable internal way whenever it has several possibilities for the execution of an action; thus, the more possibilities, the less control.

In order to cope with this situation one should consider adequate versions of simulation where the meaning of actions and the idea of preferring processes that are less non-deterministic are taken into account. This leads to two new notions of simulation:
covariant-contravariant simulation and conformance simulation, that we roughly sketched in \cite{DeFrutosEtAl07} and presented in detail in \cite{FabregasFP09}, where we proved that they can be obtained as particular instances of the general notion of categorical simulation developed by Hughes and Jacobs \cite{HughesJacobs04}.

The first new notion is that of covariant-contravariant simulation, where the alphabet of actions $Act$ is partitioned into three disjoint sets $\textit{Act}^l$, $\textit{Act}^r$, and $\textit{Act}^{\mathit{bi}}$. The intention is that simulations will treat the actions in $\textit{Act}^l$ like in the ordinary case, they will interchange the roles of the related processes for
those actions in $\textit{Act}^r$, and they will impose a symmetric condition (like that defining bisimulation) for the actions in
$\textit{Act}^{\mathit{bi}}$. The second notion, conformance simulation, captures the conformance relations \cite{Leduc92} that several authors introduced in order to formalize the notion of possible implementations. 

After showing in \cite{FabregasFP09} that they can be formalized as categorical simulations, in this paper we present their logical characterizations.
We expect that they will contribute to clarify the meaning of the corresponding simulations, shedding light on the properties that can be established when using these two frameworks within a specification procedure.

Certainly, the distinction between input and output actions or similar classifications is not meant to be new at all and, for instance, it was present in modal transition systems as early as the end of the eighties. It also plays a central role in I/O-automata \cite{Lynch88} and, more recently, appears as component of several works on interface automata \cite{AlfaroH01}, where the covariant-contravariant distinction is found when the guarantees of the specification can only be assumed if the conditions of the specification are satisfied.

Concerning conformance simulation, the first related references are also quite old \cite{Leduc92} and correspond to the notion of conformance testing, which is close to failure semantics \cite{VanGlabbeek01}. However, it is a bit surprising that in both cases there is lack of a basic theory where these notions are presented in a simplified scenario, stressing their main characteristics and properties.

Let us conclude this introduction by remarking that there is a large collection of recent papers where notions close to those studied here are either developed or applied. We regret not having the time or space to discuss, or even to cite, many of them and just to give a hint we point out \cite{20years,BenesKLS09}, where several references to other preliminary works in those directions can be found.

\section{Recalling contravariant simulations} 

We consider labeled transition systems (LTS) $(P,A,\rightarrow_P)$, where $\rightarrow_P \subseteq P\times A\times P$, to define the operational semantics of a family of processes $p\in P$. We say that the LTS is \emph{finitary} when for each $p\in P$ and $a\in A$ we have $|\{p'\mid p\stackrel{a}{\lra}p'\}|<\infty$.

We refer to \cite{FabregasFP09} for a more extensive motivation of covariant-contravariant simulations; here we only comment on the case of input/output automata. To define an adequate simulation notion for them we observe that the classic approach to simulations is based on the definition of semantics for reactive systems, where all the actions of the processes correspond to input actions that the user must trigger. Instead, the situation is the opposite whenever we have explicit output actions: it is the system that produces the actions and the user who is forced to accept the produced output. Then, it is natural to conclude that in the simulation framework we have to dualize the simulation condition when considering output actions, and this is exactly what our anti-simulation relations do.

\begin{definition}
Given $P=(P,A,\rightarrow_P)$ and $Q=(Q,A,\rightarrow_Q)$, two labeled transition systems for the alphabet $A$, and $\{A^r,A^l, A^{\mathit{bi}}\}$ a partition of this alphabet, a \textbf{$(A^r,A^l)$-simulation} (or just a covariant-contravariant simulation) between them is a relation $S\subseteq P\times Q$ such that for every $pSq$ we have:
\begin{itemize}
\item for all $a\in A^r\cup A^{\mathit{bi}}$ and all $p\stackrel{a}{\lra}p'$ there exists $q\stackrel{a}{\lra}q'$ with $p'Sq'$.

\item for all $a\in A^l\cup A^{\mathit{bi}}$, and all $q\stackrel{a}{\lra}q'$ there exists $p\stackrel{a}{\lra}p'$ with $p'Sq'$.
\end{itemize}
We will write $\CoCo{p}{q}$ if there exists a covariant-contravariant simulation $S$ such that $pSq$.
\end{definition}

Conformance simulations allow the extension of the set of actions offered by a process, so that in particular  $a\lesssim a+b$, but they also consider that a process can be ``improved'' by reducing the nondeterminism in it, so that $ap+aq\lesssim ap$. In this way we have again a kind of covariant-contravariant simulation, not driven by the alphabet of actions executed by the processes but by their nondeterminism.

\begin{definition}
Given $P=(P,A,\rightarrow_P)$ and $Q=(Q,A,\rightarrow_Q)$ two labeled transition systems for the alphabet $A$, a \textbf{conformance simulation}
between them is a relation $R\subseteq P\times Q$ such that whenever $pRq$, then:
\begin{itemize}
\item For all $a\in A$, if $p\stackrel{a}{\lra}$, then $q\stackrel{a}{\lra}$ (this means, using the usual notation for process
algebras, that $I(p)\subseteq I(q)$).

\item For all $a\in A$ such that $q\stackrel{a} {\lra}q'$ and $p\stackrel{a}{\lra}$, there exists some $p'$ with $p\stackrel{a}{\lra}p'$ and
$p'R q'$.
\end{itemize}
We will write $\Conformance{p}{q}$ if there exists a conformance simulation $R$ such that $pRq$.
\end{definition}

\section{Logical characterizations of the new semantics}\label{sec-Logical}

\subsection{Covariant-contravariant simulations}
%% Cambiado formato de formula
The class $\LogS$ characterizing the simulation semantics is defined in \cite{Cirstea06} as that containing $\true$, conjunctions  $\bigwedge_{i\in I}\varphi_i$ (which can be just finite or binary if we only want to characterize finitary process) and the existential operator $\langle a\rangle\varphi$, whose semantics is defined by: $p\models \langle a\rangle \varphi$ if there exists some $p'$ such that $p\stackrel{a}{\lra}p'$ and $p'\models\varphi$.

If we compare it with the Hennessy-Milner logic $\LogHM$ \cite{HennessyMilner85}, it can be noted that the main diference is that negation is not present. Obviously, this must be the case to capture a strict order that is not an equivalence relation, 
such as $\CoCo{}{}$. However,  adding both the constant $\false$ and the disjunction $\bigvee_{i\in I}\varphi_i$ does no harm, 
thus obtaining $\bar{\LogS}$ which also characterizes $\s_S$. 
Indeed, $\false$ is just $\bigvee_{\emptyset}\varphi_i$, while disjunctions can be moved to the top of the expression because  ${\langle a\rangle\bigvee_{i\in I}\varphi_i}\equiv{\bigvee_{i\in I}\langle a\rangle\varphi_i}$, and $p\models \bigvee_{i\in I}\varphi_i$ iff there exists some $i\in I$ such that $p\models \varphi_i$.

%%Formato de formula cambiado
The inspiration to obtain the logic characterizing $\CoCo{}{}$ comes from the fact that if we only have contravariant actions, then $\CoCo{}{}$ becomes $\s_S^{-1}$, and therefore by negating all the formulas in $\bar{\LogS}$ we would obtain the desired characterization. In particular, for the modal operator $\langle a\rangle$ we would obtain its dual form $[a]$, whose semantics is defined by: \textit{$p\models [a] \varphi$ if $p'\models\varphi$ for all $p'$ such that $p\stackrel{a}{\lra}p'$}.

Then, in the presence of both covariant and contravariant actions, we need to consider the existential operator $\langle a\rangle$ for $a\in A^r\cup A^\mathit{bi}$ and the universal operator $[a]$ for $a\in A^l\cup A^\mathit{bi}$, thus obtaining the following definition.

\begin{definition}\label{def-sem clas CC}. Given an alphabet $A$, and $\{A^r,A^l, A^{\mathit{bi}}\}$ a partition of this alphabet, the class $\LogCC$ of covariant-contravariant simulation formulas over $A$ is defined recursively by:
%\vspace{-.2cm}
\begin{itemize}
\item $\true$ and $\false$ are in $\LogCC$.

\item If $I$ is a set and $\varphi_i\in\LogCC$ for all $i\in I$ then $\bigwedge_{i\in I}\varphi_i\in\LogCC$, $\bigvee_{i\in I}\varphi_i\in\LogCC$.

\item If $\varphi\in\LogCC$ and $a\in A^r\cup A^{\mathit{bi}}$ then $\langle a\rangle\varphi\in\LogCC$.

\item If $\varphi\in\LogCC$ and $a\in A^l\cup A^{\mathit{bi}}$ then $[a]\varphi\in\LogCC$.
\end{itemize}
%\vspace{-.2cm}
The satisfaction relation $\models$ is defined recursively by:
%\vspace{-.2cm}
\begin{itemize}
\item $p\models\true$.

\item $p\models\bigwedge_{i\in I}\varphi_i$ if $p\models\varphi_i$ for all $i\in I$.

\item $p\models\bigvee_{i\in I}\varphi_i$ if $p\models\varphi_i$ for some $i\in I$.

\item $p\models \langle a\rangle \varphi$ if there exists some $p'$ such that $p\stackrel{a}{\lra}p'$ and $p'\models\varphi$.

\item $p\models [a]\varphi$ if $p'\models\varphi$ for all $p'$ such that $p\stackrel{a}{\lra}p'$.
\end{itemize}
Let $\CCLog(p)$ denote the class of covariant-contravariant simulation formulas satisfied by the process $p$, that is,
$\CCLog(p)=\{\varphi\in\LogCC\mid p\models\varphi\}$. We will write $p\OLog_{CC} q$ if $\CCLog(p)\subseteq\CCLog(q)$.
\end{definition}

The case of input/output transition systems is probably the clearest example where the covariant-contravariant duality must be applied in order to capture the appropriate simulation order. Input actions should have a covariant behavior reflecting the fact that a reactive system is expected to be ``better'' whenever it accepts a maximal set of requests; as a consequence, its logical characterization can only capture liveness properties. Conversely, output actions should be contravariant: whenever we specify a system we expect to control its behavior as much as possible, and outputs are generative, which means not controllable by the user. This contravariant character is captured by the universal operator $[a]$, which is only able to define safety properties. 

Therefore, the logic $\LogCC$ includes formulas that simultaneously capture liveness and safety at a local level, depending on the character of the actions that are used. This is not enough to adequately state all the requirements one could possible need: certainly,  after developing a myriad of different semantics for processes \cite{VanGlabbeek01,DeFrutosEtAl08c}, we would not expect that just by fiddling with one of the simplest, the simulation semantics, we would have the definite answer to treat together covariant and contravariant actions. We are also investigating the covariant-contravariant version of other semantics but, in order to establish which are the basic facts to take into account, it is clear to us that the case of plain simulation is definitely a basic keystone.

\begin{proposition}
$\CoCo{p}{q}\Longleftrightarrow p\OLog_{CC}q$.
\end{proposition}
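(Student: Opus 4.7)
The plan is to prove both implications separately, using the standard Hennessy--Milner style argument adapted to the covariant-contravariant setting.

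For the direction $\CoCo{p}{q} \Rightarrow p \OLog_{CC} q$ (soundness), I would fix a covariant-contravariant simulation $S$ with $pSq$ and prove by structural induction on $\varphi \in \LogCC$ that $p \models \varphi$ implies $q \models \varphi$. The cases for $\true$, $\false$, $\bigwedge$ and $\bigvee$ are routine. The two modal cases use the two clauses of the definition in the natural matched way: if $p \models \langle a\rangle\varphi$ with $a \in A^r \cup A^{\mathit{bi}}$, the covariant clause supplies a matching $q \stackrel{a}{\lra} q'$ with $p' S q'$, and the induction hypothesis delivers $q' \models \varphi$; dually, if $p \models [a]\varphi$ with $a \in A^l \cup A^{\mathit{bi}}$, for any $q \stackrel{a}{\lra} q'$ the contravariant clause produces a matching $p \stackrel{a}{\lra} p'$ with $p' S q'$, and since $p \models [a]\varphi$ gives $p' \models \varphi$, the induction hypothesis yields $q' \models \varphi$.

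For the direction $p \OLog_{CC} q \Rightarrow \CoCo{p}{q}$ (completeness), I would show that $\OLog_{CC}$ is itself a covariant-contravariant simulation, by contradiction in each clause. Suppose $p \OLog_{CC} q$ and $a \in A^r \cup A^{\mathit{bi}}$ with $p \stackrel{a}{\lra} p'$, but no $a$-successor $q'$ of $q$ satisfies $p' \OLog_{CC} q'$. For each such $q'$ pick $\varphi_{q'} \in \LogCC$ with $p' \models \varphi_{q'}$ and $q' \not\models \varphi_{q'}$, and form $\varphi = \bigwedge_{q \stackrel{a}{\lra} q'} \varphi_{q'}$. Then $p \models \langle a\rangle \varphi$, whereas $q \not\models \langle a\rangle \varphi$, contradicting $p \OLog_{CC} q$. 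For the dual clause, suppose $a \in A^l \cup A^{\mathit{bi}}$ and $q \stackrel{a}{\lra} q'$, but no $a$-successor $p'$ of $p$ satisfies $p' \OLog_{CC} q'$. For each such $p'$ pick $\psi_{p'}$ with $p' \models \psi_{p'}$ and $q' \not\models \psi_{p'}$, and form $\psi = \bigvee_{p \stackrel{a}{\lra} p'} \psi_{p'}$. Every $a$-successor of $p$ satisfies $\psi$, so $p \models [a]\psi$, while $q' \not\models \psi$ gives $q \not\models [a]\psi$, again contradicting $p \OLog_{CC} q$. The case where $p$ has no $a$-successor is handled gracefully by the empty disjunction $\false$, with $p \models [a]\false$ vacuously true and $q \not\models [a]\false$.

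The only delicate point, rather than an obstacle, is making sure the modal cases line up correctly with the polarities: conjunctions are used to build distinguishing formulas for the covariant (existential) clause, while disjunctions are used for the contravariant (universal) clause. This is exactly why the logic needs both $\bigwedge$ and $\bigvee$, and why restricting to one of them would fail. Allowing index sets of arbitrary cardinality in Definition~\ref{def-sem clas CC} ensures the construction works without any finiteness hypothesis on the LTS; for finitary processes the corresponding finite connectives suffice.
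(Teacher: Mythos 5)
Your proof is correct and follows essentially the same route as the paper's: structural induction for soundness, and for completeness a proof that $\OLog_{CC}$ is itself a covariant-contravariant simulation, using a conjunction of distinguishing formulas under $\langle a\rangle$ for the covariant clause and a disjunction under $[a]$ for the contravariant clause. Your explicit observation that the case where $p$ has no $a$-successor is absorbed by the empty disjunction $\false$ is a small clarification the paper leaves implicit, but it is not a different argument.
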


\begin{proof}
We will first prove the implication from left to right. Assume that we have $pSq$ for some covariant-contravariant simulation
$S$: we must show that for each $\varphi\in\LogCC$, $p\models\varphi$ implies $q\models\varphi$. We proceed by structural induction over $\varphi$.
%\vspace{-.2cm}
\begin{itemize}
\item $q\models\true$, trivially.

\item Let $p\models \langle a\rangle\varphi$ with $a\in A^r\cup A^\mathit{bi}$. Then there is $p'$ such that $p\stackrel{a}{\lra}p'$ with
$p'\models\varphi$. Now, since $pRq$ and $a\in A^r\cup A^\mathit{bi}$ there must be a $q'$ such that $q\stackrel{a}{\lra}q'$ with $p'Rq'$ and, by induction hypothesis, $q'\models\varphi$, that is, $q\models \langle a \rangle\varphi$.

\item Let $p\models [a]\varphi$. Then for all $p'$ such that $p\stackrel{a}{\lra}p'$ we have $p'\models\varphi$. Let $q'$ be such that $q\stackrel{a}{\lra}q'$ then, since $pSq$ and $a\in A^l\cup A^\mathit{bi}$, there exists $p'$ such that $p\stackrel{a}{\lra}p'$ and
$p'Sq'$. By induction hypothesis, since $p'\models\varphi$ then $q'\models\varphi$, that is, $q\models [a]\varphi$.

\item Let $p\models\bigwedge_{i\in I}\varphi_i$. Then $p\models\varphi_i$ for all $i\in I$, so by induction hypothesis $q\models\varphi_i$ for
all $i\in I$ and then $q\models\bigwedge_{i\in I}\varphi_i$.

\item $p\models\bigvee_{i\in I}\varphi_i$. It is analogous to the previous case.
\end{itemize}
%\vspace{-.2cm}
For the other implication let us assume that $p\OLog_{CC}q$ and show that $\OLog_{CC}$ is a covariant-contravariant simulation. Let $a\in A^r\cup A^{\mathit{bi}}$ and $p\stackrel{a}{\lra}p'$; then there exists $q'$ such that $q\stackrel{a}{\lra}q'$ and $p'\OLog_{CC}q'$. Otherwise, we have that for all $q\stackrel{a}{\lra}q'$,
$p'\not\OLog_{CC}q'$, that is, we have formulas $\varphi_{q'}$ such that $\varphi_{q'}\in\CCLog(p')\setminus\CCLog(q')$. Now, taking
$\phi=\langle a\rangle\bigwedge_{q'}\varphi_{q'}$, we have $p\models\phi$ and, by hypothesis, also $q\models\phi$. That means that there exists some $q'_0$ such that $q\stackrel{a}{\lra}q'_0$ with $q'_0\models\bigwedge_{q'}\varphi_q'$. But this cannot be the case since $q'_0\not\models\varphi_{q'_0}$.

Now let $a\in A^l\cup A^{\mathit{bi}}$ and $q\stackrel{a}{\lra}q'$; similarly we must show that there exists $p'$ such that $p\stackrel{a}{\lra}p'$ and $p'\OLog_{CC}q'$. By way of contradiction, if for all $p\stackrel{a}{\lra}p'$ we have $p'\not\OLog_{CC}q'$, there are formulas $\varphi_{p'}\in\CCLog(p')\setminus\CCLog(q')$. Taking  $\phi=[a]\bigvee_{p'}\varphi_{p'}$ we have $p\models\phi$ and then by hypothesis $q\not\models\phi$, but this cannot be since $q'\not\models\varphi_{p'}$ for all $p'$. \qed
\end{proof}

\subsection{Conformance simulations}
Conformance simulation can be considered to be a variant of the covariant-contravariant framework in which, instead of separating the actions in several classes, we have a mixed uniform behavior for all the actions. This is brought forward by the fact that if a process cannot execute $a$, then $\Conformance{p}{p+aq} $. However, once we have $a\in I(p)$ the contravariant character shows since then $\Conformance{p+aq}{p}$.

%%%Formato formula
This mixed character of all the actions is now captured at the logical level by a new modal operator $a$, whose semantics is defined by:
\textit{$p\models a\varphi$ if $p\stackrel{a}{\lra}$ and $p'\models\varphi$ for all $p\stackrel{a}{\lra}p'$}.
It is quite interesting to observe that we can alternatively define $a$ as ``$\langle a\rangle\wedge[a]$'', since we have: \textit{$p\models a\varphi\;\Longleftrightarrow\; p\models\langle a\rangle\varphi$ and $p\models [a]\varphi$},
which also reveals the mixed intended nature of all the actions in the conformance framework.

\begin{definition}\label{def-sem clas} The class $\LogCS$ of conformance simulation formulas over $A$ is defined recursively by:
\begin{itemize}
\item $\true\in\LogCS$.

\item If $I$ is a set and $\varphi_i\in\LogCS$ for all $i\in I$ then $\bigwedge_{i\in I}\varphi_i,\in\LogCS$, $\bigvee_{i\in I}\varphi_i\in\LogCS$.

\item If $\varphi\in\LogCS$ and $a\in A$ then $a\varphi\in\LogCS$.
\end{itemize}
%\vspace{-.2cm}
The corresponding satisfaction relation $\models$ is defined recursively by:
%\vspace{-.2cm}
\begin{itemize}
\item $p\models\true$.

\item $p\models\bigwedge_{i\in I}\varphi_i$ if $p\models\varphi_i$ for all $i\in I$.

\item $p\models\bigvee_{i\in I}\varphi_i$ if $p\models\varphi_i$ for some $i\in I$.

\item $p\models a\varphi$ if $p\stackrel{a}{\lra}$ and $p'\models\varphi$ for all $p\stackrel{a}{\lra}p'$.
\end{itemize}
Let $\CoLog(p)$ denote the class of conformance simulation formulas satisfied by the process $p$, that is, $\CoLog(p)=\{\varphi\in\LogCS\mid
p\models\varphi\}$. We will write $p\OLog_{CS} q$ if $\CoLog(p)\subseteq\CoLog(q)$.
\end{definition}

One now expects that the liveness and safety requirements will be captured simultaneously and this is indeed the case since from $p\models a\varphi$ we know both that $p$ is able to execute $a$ and that, after executing it in any possible way,  $\varphi$ will be satisfied. Therefore, conformance simulation proves to be quite a reasonable semantics whenever we do not want to distinguish between reactive and generative actions, as discussed in the previous section.

\begin{proposition}
$\Conformance{p}{q}\Longleftrightarrow p\OLog_{CS}q$.
\end{proposition}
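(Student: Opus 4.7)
The plan is to follow the same two-directional scheme used for the covariant-contravariant case, replacing the pair of separating formulas used there with a single one built from the new modality $a$.

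For the implication $\Conformance{p}{q}\Rightarrow p\OLog_{CS}q$, I would fix a conformance simulation $R$ with $pRq$ and prove by structural induction on $\varphi\in\LogCS$ that $p\models\varphi$ implies $q\models\varphi$. The cases $\true$, $\bigwedge_{i\in I}\varphi_i$, and $\bigvee_{i\in I}\varphi_i$ are routine; the only interesting case is $\varphi=a\psi$. From $p\models a\psi$ we get $p\stackrel{a}{\lra}$, so the first clause of conformance simulation gives $q\stackrel{a}{\lra}$. For every $q\stackrel{a}{\lra}q'$, the second clause (applicable precisely because $p\stackrel{a}{\lra}$) supplies some $p\stackrel{a}{\lra}p'$ with $p'Rq'$; since $p'\models\psi$, the induction hypothesis yields $q'\models\psi$, and therefore $q\models a\psi$.

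For the converse, I would show that $\OLog_{CS}$ is itself a conformance simulation. The first clause is the easier one: if $p\stackrel{a}{\lra}$, then $p\models a\,\true$, so by hypothesis $q\models a\,\true$, which forces $q\stackrel{a}{\lra}$. The second clause is the main point. Suppose $q\stackrel{a}{\lra}q'$ and $p\stackrel{a}{\lra}$, and assume for contradiction that $p'\not\OLog_{CS}q'$ for every $p'$ with $p\stackrel{a}{\lra}p'$. For each such $p'$ pick a witness $\varphi_{p'}\in\CoLog(p')\setminus\CoLog(q')$ and consider $\phi=a\bigvee_{p'}\varphi_{p'}$. Then $p\models\phi$: indeed $p\stackrel{a}{\lra}$, and every $a$-successor $p''$ of $p$ satisfies its own $\varphi_{p''}$, hence the disjunction. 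By hypothesis $q\models\phi$, which in particular forces $q'\models\bigvee_{p'}\varphi_{p'}$, so $q'\models\varphi_{p_0}$ for some $p_0$; this contradicts $\varphi_{p_0}\notin\CoLog(q')$.

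The main obstacle is choosing the right separating formula in the contrapositive step, and this is precisely where the mixed modality $a$ pays off. In the covariant-contravariant proof two formulas were needed, one of shape $\langle a\rangle\bigwedge\varphi_{q'}$ and one of shape $[a]\bigvee\varphi_{p'}$, matching the two separate clauses of that simulation. For conformance simulation both obligations are fused into the single modality $a$, so a single formula $a\bigvee_{p'}\varphi_{p'}$ simultaneously enforces executability of $a$ on the $q$-side (giving the first clause via $a\,\true$) and matching of every $a$-successor of $q$ with a suitable $a$-successor of $p$ (giving the second clause). I should also note that arbitrary index sets in $\bigwedge$ and $\bigvee$ are used here because finitariness of the LTS is not assumed.
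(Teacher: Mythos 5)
Your proof is correct and follows essentially the same route as the paper's: the same structural induction for the left-to-right direction, the same use of $a\,\true$ to establish the first clause of conformance simulation, and the same contradiction argument via the separating formula $a\bigvee_{p'}\varphi_{p'}$ for the second clause. Your closing remarks on why the fused modality lets one formula do the work of the two used in the covariant-contravariant case, and on the need for arbitrary index sets absent finitariness, are accurate but not part of the paper's argument.
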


\begin{proof}
We first prove the implication from left to right. Assume that we have $pRq$ for some conformance simulation $R$: we must show that
for each $\varphi\in\LogCS$, $p\models\varphi$ implies $q\models\varphi$. The proof will follow by structural induction over $\varphi$, the case for $\true$ being trivial.
\begin{itemize}
\item Let $p\models a\varphi$. Then, for all $p\stackrel{a}{\lra}p'$ we have $p'\models\varphi$ and there exists at least one such $p'$. Since $pRq$ also $q\stackrel{a}{\lra}$, and it remains to prove that $q'\models \varphi$ for all successors $q\stackrel{a}{\lra}q'$. 
Let $q'_0$ be such that $q\stackrel{a}{\lra}q'_0$. 
Again, since $pRq$ and $p\stackrel{a}{\lra}$, for each $q\stackrel{a}{\lra}q'$ there exists some $p\stackrel{a}{\lra}p'$ such that $p'Rq'$. So, for $q'_0$ there exists
$p'_0$ such that $p'_0Rq'_0$ and, since $p'_0\models\varphi$, by induction hypothesis also $q'_0\models\varphi$. Thus
$q\models a\varphi$.

\item Let $p\models\bigwedge_{i\in I}\varphi_i$. Then $p\models\varphi_i$ for all $i\in I$, so by induction hypothesis 
$q\models\varphi_i$ for all $i\in I$ and then $q\models\bigwedge_{i\in I}\varphi_i$.

\item $p\models\bigvee_{i\in I}\varphi_i$. It is analogous to the previous case.
\end{itemize}

For the other implication, let us assume that $p\OLog_{CS}q$: we show that $\OLog_{CS}$ is a conformance simulation. First, if
$p\stackrel{a}{\lra}$ then, since $\CoLog(p)\subseteq\CoLog(q)$ and $p\models a\true$, also $q\models a\true$ and hence $q\stackrel{a}{\lra}$. Now, let $q\stackrel{a}{\lra}q'$ and $p\stackrel{a}{\lra}$. Let us see that there exists some $p'$ such that $p\stackrel{a}{\lra}p'$ and $p'\OLog_{CS}q'$. By way of contradiction, if $p'\not\OLog_{CS}q'$ for all such $p'$, then for each $p'$ there is a formula  $\varphi_{p'}\in\CoLog(p')\setminus\CoLog(q')$. Let $\phi=a\bigvee_{p'}\varphi_{p'}$. It is easy to see that $p\models\phi$: indeed, for each $p'$ such that $p\stackrel{a}{\lra}p'$, $p'\models\varphi_{p'}$. Since  $p\OLog_{CS}q$, it must also be the case that $q\models\phi$, that is, for each $q''$ such that $q\stackrel{a}{\lra}q''$, $q''\models\bigvee_{p'}\varphi_{p'}$; but $q\stackrel{a}{\lra}q'$ and $q'\not\models\varphi_{p'}$ for any $p'$, contradicting the fact that $q\models\phi$. \qed
\end{proof}

\section{Some examples and a short discussion}

We will start by illustrating the behavior of covariant-contravariant simulations in the case in which we distinguish between input (reactive) and output (generative) actions. Consider the following expending machines:

\begin{tabular}{l@{\hskip 1cm}l@{\hskip .2cm}l}
$\mathsf{onecoke}$ & : & $\mathsf{coin}\rightarrow \mathsf{coke}\rightarrow 0$\\
$\mathsf{cokeorlemonade}$ & : & $\mathsf{coin}\rightarrow \big( (\mathsf{coke}\rightarrow 0)+(\mathsf{lemonade}\rightarrow 0)\big)$
\end{tabular}

\noindent The classical approach would consider $\mathsf{onecoke}\s_S\mathsf{cokeorlemonade}$. However, if the drinks are provided by the machine in an autonomous way then they should be formalized as outputs, which leads us to
\[
\CoCo{\mathsf{cokeorlemonade}}{\mathsf{onecoke}}.
\]
This  is justified by the fact that choices between generative actions become internal and therefore generate (undesired) non-deterministic behavior.

At the logical level the difference between the two processes above can be brought forward by means of the formula $\langle\mathsf{coin}\rangle\;[\mathsf{lemonade}]\;\false$, which $\mathsf{onecoke}$ satisfies but {\sf cokeorlemonade} does not. It could be thought that the process $\mathsf{cokeorlemonade}$ is being punished for offering lemonade besides coke, but this would be an incorrect interpretation because it follows the classical reactive approach where simultaneous offers mean ``the user makes his choice''; instead, when outputs are generative it is the machine that chooses. As a consequence, from $\mathsf{cokeorlemonade}\not\models\langle\mathsf{coin}\rangle[\mathsf{lemonade}]\;\false$ we implicitly infer that it could be the case that after inserting a coin we did not get our favorite drink (Coke).

Let us now show the differences between covariant-contravariant and conformance simulations. First, at the formal level, the fact that the modal operator $a$ can be defined as ``$\langle a\rangle\wedge[a]$'' does not mean that these two basic modal operators can appear separately in a formula characterizing $\Conformance{}{}$. Obviously this cannot be the case since separated $\langle a\rangle$ operators characterize plain simulation, and for the process $\mathsf{choice\_coke\_lemonade}$: $(\mathsf{coin}\rightarrow \mathsf{coke}\rightarrow 0)+(\mathsf{coin}\rightarrow \mathsf{lemonade}\rightarrow 0)$ we have
\[\mathsf{choice\_coke\_lemonade}\models \langle\mathsf{coin}\rangle\langle\mathsf{lemonade}\rangle\true\qquad
\mathsf{onecoke}\not\models \langle\mathsf{coin}\rangle\langle\mathsf{lemonade}\rangle\true
\] but $\Conformance{\mathsf{choice\_coke\_lemonade}}{\mathsf{onecoke}}$. 

Now, if we consider de universal operator $[a]$, its weakness when used alone arises when it is trivially satisfied. For instance, we have $0\models[\mathsf{coin}]\;\false$ but $\mathsf{onecoke}\not\models[\mathsf{coin}]\;\false$ and $\Conformance{0}{\mathsf{onecoke}}$.

One could infer that conformance simulation is the definitive solution to capture all the natural requirements in an specification. Certainly, it combines covariant and contravariant aspects in a very balanced way, but the fact that it treats all the actions uniformly makes it impossible to capture the difference between input and output actions. In particular: $\Conformance{\mathsf{onecoke}}{\mathsf{cokeorlemonade}}$ but we have already discussed that when outputs are generative, choices always generate non-deterministic behaviors that $\Conformance{}{}$ is not punishing at all.

On the other hand, choices between equal actions are also considered ``harmful'' by the conformance semantics so that if $\Conformance{p}{q}$ then ${ap}=_{CS}{ap+aq}$. This is sometimes a too pessimistic approach, which we can illustrate by the following {\sf slot\_machine} specification: 
\[\mathsf{slot\_machine}:(\mathsf{coin}\rightarrow\mathsf{souvenir}\rightarrow 0)+ (\mathsf{coin}\rightarrow((\mathsf{million\$}\rightarrow 0)+ (\mathsf{souvenir}\rightarrow 0)))
\]

\noindent which becomes conformance simulation equivalent to the {\sf pluff\_machine}
\[\mathsf{pluff\_machine}:\mathsf{coin}\rightarrow\mathsf{souvenir}\rightarrow 0
\]

\noindent In this case the possible return of the big pot is not taken into account at all. Obviously, the solution comes from choosing in each case the adequate semantics to capture accurately the desired behaviors. The bad news is that we need to study many diferent semantics; the good news for us is\dots the same!, since we are already working on them

%\section{Conclusions}

%\bibliography{bibliografia}
%\bibliographystyle{abbrv}

\end{document}